\theoremstyle{plain}
\newtheorem{definition}{Definition}
\newtheorem{example}{Example}
\newtheorem{remark}{Remark}
\newtheorem{assumption}{Assumption}
\newtheorem{lemma}{Lemma}
\newtheorem{theorem}{Theorem}
\newtheorem{corollary}{Corollary}
\newtheorem{proposition}{Proposition}
\newtheorem{algorithm}{Algorithm}
\begin{document}

\title{\LARGE \bf Robust microphase separation through chemical reaction networks}


\author{Franco Blanchini$^{1}$, Elisa Franco$^{2}$, Giulia Giordano$^{3}$, Dino Osmanovi{\'{c}}$^{2}$
\thanks{This work has been partially supported by the U.S. N.S.F. (CAREER award 1938194 and FMRG:Bio award 2134772 to EF) and by the Sloan Foundation (award G-2021-16831). This work has also been partially funded by the European Union (NextGenerationEU grant Uniud-DM737 to FB and ERC INSPIRE grant 101076926 to GG); views and opinions expressed are however those of the authors only and do not necessarily reflect those of the European Union or the European Research Council Executive Agency, and neither the European Union nor the granting authority can be held responsible for them.
}
\thanks{$^{1}$ Department of Mathematics, Computer Science and Physics, University of Udine, Italy. \texttt{blanchini@uniud.it}}
\thanks{$^{2}$ Department of Mechanical and Aerospace Engineering, University of California at Los Angeles, USA.
\texttt{efranco@seas.ucla.edu, osmanovic.dino@gmail.com}}
\thanks{$^{3}$ Department of Industrial Engineering,
University of Trento, Italy. \texttt{giulia.giordano@unitn.it}}
}

\maketitle
\thispagestyle{empty} 
\begin{abstract}
The interaction of phase-separating systems with chemical reactions is of great interest in various contexts, from biology to material science. In biology, phase separation is thought to be the driving force behind the formation of biomolecular condensates, i.e. organelles without a membrane that are associated with cellular metabolism, stress response, and development. RNA, proteins, and small molecules participating in the formation of condensates are also involved in a variety of biochemical reactions: how do the chemical reaction dynamics influence the process of phase separation? Here we are interested in finding chemical reactions that can arrest the growth of condensates, generating stable spatial patterns of finite size (microphase separation), in contrast with the otherwise spontaneous (unstable) growth of condensates. We consider a classical continuum model for phase separation coupled to a chemical reaction network (CRN), and we seek conditions for the emergence of stable oscillations of the solution in space.  Given reaction dynamics with uncertain rate constants, but known structure, we derive easily computable conditions to assess whether microphase separation is impossible, possible for some parameter values, or robustly guaranteed for all parameter values within given bounds. Our results establish a framework to evaluate which classes of CRNs favor the emergence of condensates with finite size, a question that is broadly relevant to understanding and engineering life.
\end{abstract}

\begin{IEEEkeywords}
Chemical reaction networks, Phase separation, Robustness analysis, Stability, Uncertain systems.
\end{IEEEkeywords}

\section{Modelling Microphase Separation\\ in the Presence of Chemical Reactions}

\IEEEPARstart{P}{hase} separation has emerged as a key area within biological research over the last decade \cite{Banani2017}. By utilizing the physical properties of phase separation, it is hypothesized that living organisms are able to exercise fine control over their chemical production \cite{Klosin2020}. Within the non-equilibrium cellular environment, chemical reactions and phase separation combine to produce a new class of physical systems, deemed \textit{active emulsions} \cite{weber_2019} or \textit{active droplets} \cite{zwicker_2016}. Beyond biological relevance, these systems have displayed intriguing properties in their own right \cite{Li2020,Osmanovic2019}, yielding novel behavior in both spatial organization and dynamical properties. By combining both \textit{conserved} dynamics (phase separation) and \textit{non-conserved} dynamics (reactions), we can think of active emulsions as an extension of classical reaction-diffusion models \cite{arcac2011,Hori2015,Hori2019,Kashima2015,Miranda2021,murray2001mathematical}, and bring similar tools to bear on the analysis of their properties.


We model such systems by considering the time evolution of $n$ chemical species, with a vector of concentrations $c(z) =[c_1(z) \, c_2(z) \, \dots \, c_n(z)]\in \mathbb{R}^{n}_{+}$, $z\in \mathbb{R}^d$,
under the assumption that any of the species can undergo phase separation: it can be in two phases, condensed and dispersed, respectively characterized by concentrations $c_i^c$ and $c_i^d$.

\begin{figure}[b]
\centering 
\includegraphics[width=.8\columnwidth]{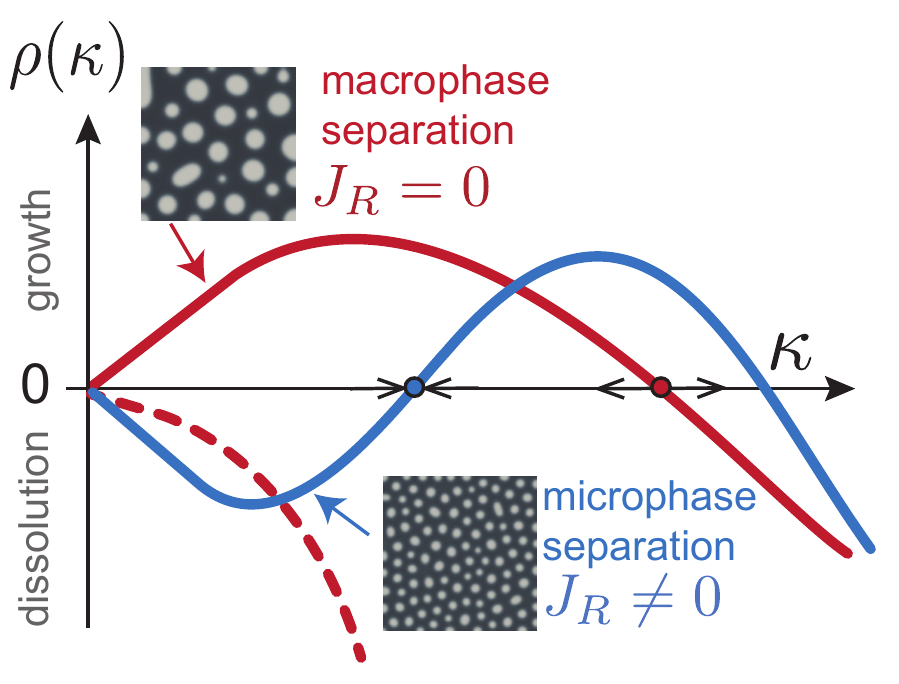}
\caption{Dispersion relation curves for phase separating systems. Condensate size grows when the wave number $\kappa$ decreases. When there is no separation (red, dashed curve), condensates dissolve. In macrophase separation (red, solid curve), typical in the absence of chemical reactions, condensates either grow until they are macroscopically separated (for small $\kappa$) or dissolve completely (for large $\kappa$), as the intermediate crossing point is an unstable fixed point. In microphase separation (blue curve), which can be induced by chemical dynamics, both large condensates (small $\kappa$) and small condensates (large $\kappa $) have a negative growth rate, while condensates of intermediate size have a positive growth rate, hence the mid crossing point is a stable fixed point, leading to a prevalent condensate size.}
\label{Fig:separation}
\end{figure}

The overall dynamics is  described by equation
\begin{equation} \label{eq:fum}
\frac{\mathrm d c(z,t)}{\mathrm d t} = \mathbf I(c(z,t))+ \mathbf R(c(z,t)),
\end{equation}
where the term
$\mathbf I(c(z,t))= \nabla \cdot \left(\mathfrak{D} \nabla\frac{\delta F(c(z,t))}{\delta c(z,t)}\right)$
describes conserved spatial dynamics by considering an energy functional $F(c(z,t))$ (see~\cite{osmanovic2022chemical}) and inferring the corresponding time evolution via ``model B'' dynamics~\cite{Hohenberg1977}. We assume that every species has the same homogeneous diffusion coefficient and molecular mass, hence the diffusion matrix is $\mathfrak{D}=d I$, where $d>0$ is a common diffusion constant. 
The term $\mathbf R(c(z,t))$ describes the reaction fluxes generated by a set of chemical reactions, assuming the existence of a free energy source that maintains the rates at which the reactions proceed~\cite{Osmanovic2019}.

Full analysis of \eqref{eq:fum} is usually rather complex, however we can obtain information about the properties of the solution by linearization, which can give tractable results on the effects of chemistry on phase separation. We perform a linear stability analysis of model~\eqref{eq:fum} near equilibrium: $c(z,t) =  c_s +  w \exp(i \kappa z + \rho(\kappa) t)$, where $w$ is small: $w^2\approx0$. The linearization around $c_s$ involves the Jacobian matrices $J_I(\kappa)=\nabla_{c(z)} \mathbf I(c( z,t))\big|_{c=c_s}$ of the conserved dynamics and $J_R=\nabla_{c(z)} \mathbf R(c(z,t))\big|_{c=c_s}$ of the chemical reaction dynamics. We assume that the phase-separating species is the first one, so that $J_I(\kappa)$ has the symmetric structure
\begin{equation}\label{peculiar}
J_I(\kappa)=\begin{bmatrix}
 \mu |\kappa|^2 -\gamma^2 |\kappa|^4 & -d|\kappa|^2 \epsilon_{12} & \ldots &  -d|\kappa|^2 \epsilon_{1N} \\
 -d|\kappa|^2 \epsilon_{12} & -d|\kappa|^2 &  \ldots & -d|\kappa|^2 \epsilon_{2N} \\
\vdots                         &  \vdots                       &    \ddots               & \vdots \\  
 -d|\kappa|^2 \epsilon_{1N} & -d|\kappa|^2 \epsilon_{2N} & \ldots & -d|\kappa|^2
\end{bmatrix},
\end{equation}
where $\mu>0$, $\gamma$ is the surface tension and $\epsilon_{ij}$ are constant parameters representing spatial attraction or repulsion among species. 
We thus obtain the relationship
\begin{equation}\label{DispersionRelation}
    \rho(\kappa)  w=  [J_I(\kappa)+J_R]  w,
\end{equation}
where the spectral abscissa $\rho(\kappa)$ of $(J_I(\kappa)+J_R)$ characterizes the dynamics of spatially oscillatory behaviors. The \emph{dispersion relation} curve $\rho(\kappa)$ depends on the wave number $\kappa$: the growth rate of a spatial wave depends on its wave number.

Linearizing the problem allows us to make quantitative predictions about the behavior of a given system without going through the expensive computation of the full solution to \eqref{eq:fum}. In particular, we can distinguish between systems that undergo \textit{microphase separation} (MS), \textit{macrophase separation} or \textit{no phase separation}. The first case corresponds to finite size patterns do not change over time: as illustrated in Fig.~\ref{Fig:separation}, this happens when $\rho( \kappa ) < 0$ for small $ \kappa  $, $\rho( \kappa) > 0$ for intermediate $\kappa $, and $\rho( \kappa ) < 0$ for large $\kappa $. It is of special interest as droplet size is regulated through the action of chemistry. Thus, spatial compartmentalization occurs with a particular length scale, a prerequisite to being able to utilize compartmentalization for precision control of spatially separated chemical reactions, creating droplets with ``life-like'' properties \cite{weber_2019}.

 
The design space of such a problem is large, in that we have many possible chemical reaction networks (CRNs) that can couple to phase separating systems. To assess the likelihood of MS and identify interesting candidate CRNs for experimental realization, in previous work we computationally explored the parameters and chemical reaction networks that lead to matrices $J_I(\kappa)$ and $J_R$ in \eqref{DispersionRelation}~\cite{osmanovic2022chemical}. The probability that $\rho(\kappa)$ has three roots was evaluated when generating either random $J_R$ matrices or random CRNs~\cite{VanDerSchaft2013}. 

In this paper, we re-examine the problem of inducing MS via chemical reactions through a control-theoretic approach. 
Given a CRN structure, whose rate parameters are unknown but bounded in a known range, we rely on parametric robustness approaches \cite{Barmish1994} and vertex results \cite{BlaColGioZor2020,BlaColGioZor2022,Giordano2016} to provide conditions ensuring that MS is: \textit{impossible}; \textit{possible for some parameter values}; or \textit{robustly guaranteed for all parameter values in the range}.

We achieve these conditions by converting the original problem, formulated in terms of the spectral abscissa of an uncertain matrix and thus challenging to handle, into a problem formulated in terms of the robust analysis of the determinant of an uncertain matrix, much simpler to deal with.

\section{Microphase Separation: a Spectral Problem}

To characterize the dispersion relation, we need to study the spectrum of matrix $J_I(\kappa)+J_R$. In view of \eqref{peculiar}, we can write $J_I(\kappa)= J_2 |\kappa|^2 + J_4 |\kappa|^4$ and the combined Jacobian in \eqref{DispersionRelation} becomes:
\begin{equation}\label{linBDC}
J(\kappa)=J_R + J_2 |\kappa|^2 + J_4 |\kappa|^4.
\end{equation}

Without restriction, we assume that $J_R$ can be rewritten according to the \emph{$BDC$ decomposition} introduced in~ \cite{BlanchiniGiordano2014,BlanchiniGiordano2021,Giordano2016}; this is possible for the Jacobian of any generic CRN.

\begin{assumption} The Jacobian $J_R$ can be decomposed as $J_R=B \Delta C$, where $\Delta=\mbox{diag}\{\Delta_1,
\Delta_2, \dots, \Delta_m\}$ has positive diagonal entries representing the uncertain parameters (the nonzero partial derivatives of the CRN system), while
matrices $B \in \mathbb{Z}^{n \times m}$ and  $C \in \mathbb{Z}^{m \times n}$ represent the known structure of the given CRN.
The unknown parameters $\Delta_j$ are bounded as
\begin{equation}\label{eq:Dbounds}
\Delta \in \mathcal{D} = \{\Delta \colon 0 \leq \Delta_j^- \leq \Delta_j\leq \Delta_j^+\},
\end{equation}
for given lower bounds $\Delta_j^-$ and upper bounds $\Delta_j^+$.
\end{assumption}

We restrict our analysis to CRNs with one conservation law.
\begin{assumption}\label{zero_eig}
For all $\Delta \in \mathcal{D}$, matrix $J_R = B\Delta C$ is singular and has $n-1$  eigenvalues with negative real part. 
Also, there exists a nonnegative vector $v^\top \geq 0$ such that
$v^\top B=0$, representing a conservation law: since $v^\top B\Delta C=0$, $v^\top$ is a left eigenvector of $J_R$ associated with the eigenvalue at $0$. 
\end{assumption}

Assumption~\ref{zero_eig} entails that matrix $J_R$ is marginally stable. Its stability can be assessed through several existing techniques tailored to CRNs, proposed for instance in~\cite{AlrAng16,BlanchiniGiordano2014,BCCG19,Clarke80}.

We also introduce suitable assumptions on the symmetric matrix $J_I(\kappa)=J_2 |\kappa|^2 + J_4 |\kappa|^4$.
\begin{assumption}\label{J2J4}
The symmetric matrix $J_2$ is indefinite, the symmetric matrix $J_4$ is negative semi-definite,
and there exists $\bar \kappa$ such that 
$J_I(\bar \kappa)= J_2 |\bar \kappa|^2 + J_4 |\bar \kappa|^4$ is negative definite.
\end{assumption}

The Jacobian matrix $J(\kappa)=J_R + J_2 |\kappa|^2 + J_4 |\kappa|^4$ in \eqref{linBDC} has eigenvalues $\lambda_i(\Delta,\kappa)$, $i=1,\dots,n$, and its spectral abscissa (namely, the maximum real part $\Re$ of its eigenvalues) is
\begin{equation*}    
\rho(\Delta,\kappa) = \max_i \left \{\Re  ( \lambda_i(\Delta,\kappa) \right \}.
\end{equation*}
The eigenvalues of $J_R=B\Delta C$ are $\lambda_i(\Delta,0)$, $i=1,\dots,n$.

\begin{assumption}
The eigenvalue of matrix $J_R=B\Delta C$ associated with the conservation law is $\lambda_1(\Delta,0)=0$.
\end{assumption}

\begin{definition}\label{SignChanges}
A continuous function $f(\kappa)$
has a {\em positive sign change} if $f(\kappa_1) <0<f(\kappa_2)$ for some $\kappa_1 < \kappa_2$, while it has a {\em negative sign change} if $f(\kappa_1) >0>f(\kappa_2)$. 
Moreover, function $f(\kappa)$ is {\em initially positive (respectively, negative)} if
there exists an open right neighborhood of $0$, $(0,\hat \kappa)$, in which the function is positive (respectively, negative).
\end{definition}

We can now define the (robust) microphase separation property that is the subject of our analysis.
\begin{definition}[Microphase separating system]\label{separation_condition}
System~\eqref{eq:fum} exhibits \emph{microphase separation (MS)} if, for a given $\Delta\in \mathcal{D}$, $\rho(\Delta,\kappa)$ is initially negative (i.e., there exists $\hat \kappa$ such that $\rho(\Delta,\kappa)<0$ for all $\kappa \in (0,\hat \kappa)$), then has a positive sign change 
and finally a negative sign change (i.e., $\rho(\Delta,\kappa_1)>0$ and $\rho(\Delta,\kappa_2)<0$ for some $0 < \hat \kappa < \kappa_1<\kappa_2$).
System~\eqref{eq:fum} exhibits \emph{robust MS} if this condition holds for all $\Delta \in \mathcal{D}$.
\end{definition}

The MS condition in Definition~\ref{separation_condition} describes a qualitative behavior of $\rho(\Delta,\kappa)$ consistent with the blue curve in Fig.~\ref{Fig:separation}: the size of condensates shrinks for small $\kappa$, grows for intermediate values of $\kappa$, shrinks again for large $\kappa$. For known parameters, the condition can be tested by directly computing the eigenvalue curves~\cite{osmanovic2022chemical}. Departing from this approach, our analysis aims to develop efficient methods to tackle the case of \textit{uncertain} CRN parameters: given a range of possible parameter values, by taking advantage of the $BDC$ decomposition of the Jacobian of the chemical reaction dynamics, we check whether the condition can or cannot hold for some parameters in the range, and whether it holds \textit{robustly} for all parameters in the range.

\section{Robust Determinant Conditions\\ for Microphase Separation}
We obtain (robust) conditions for MS by mapping the dispersion relation problem, involving the spectral abscissa $\rho(\Delta,\kappa)$, to a determinant problem.
To this aim, given uncertain CRN parameters $\Delta$, whose values are bounded in the set $\mathcal{D}$ as in \eqref{eq:Dbounds}, we consider the functions
 \begin{eqnarray}
\Psi^-(\kappa) &=& \min_{\Delta \in \mathcal{D}} \det [-(B\Delta C + |\kappa|^2 J_2 + |\kappa|^4 J_4) ],
\label{Psi-}
\\
\Psi^+(\kappa) &=& \max_{\Delta \in \mathcal{D}} \det [-(B\Delta C + |\kappa|^2 J_2 + |\kappa|^4 J_4) ],\label{Psi+}
 \end{eqnarray}
which can be easily computed, as we will show in Section~\ref{Sec:ComputePsi}.
In view of their definition,  $\Psi^-(\kappa)\leq \Psi^+(\kappa)$. Moreover, $\Psi^-(0)=\Psi^+(0)=0$,
because, in view of Assumption~\ref{zero_eig}, the determinants in \eqref{Psi-} and \eqref{Psi+} are $0$ for $\kappa=0$.
Both functions grow to infinity as $\kappa \to +\infty$, as shown after Lemma~\ref{tend_to_minus_inf}.

Studying functions $\Psi^-(\kappa)$ and $\Psi^+(\kappa)$ allows us to provide:
\begin{itemize}
\item a crucial necessary condition for MS: 
$\Psi^+$ needs to be initially positive (Theorem~\ref{Th:possible}); 
\item a sufficient condition ensuring MS for
some values of the parameters (which we can determine):
{\em either} $\Psi^-$ {\em or} $\Psi^+$ is initially positive and has a negative sign change (Theorem~\ref{compatible}); 
\item a sufficient condition ensuring \textit{robust} MS for all admissible parameters: {\em both} $\Psi^-$ and $\Psi^+$ are initially positive and have a negative sign change (Theorem~\ref{Th:robust}).
\end{itemize}

The technical challenge lies in relating
the spectral abscissa $\rho(\Delta,\kappa)$ to the curves $\Psi^-$ and $\Psi^+$.
The difficulty arises from the fact that, while a negative value
of the determinant $\det [-(B\Delta  C + |\kappa|^2 J_2 + |\kappa|^4 J_4) ]$ implies positivity of the spectral abscissa $\rho(\Delta,\kappa)$, the opposite unfortunately is not true: the determinant may well be positive even when $\rho(\Delta,\kappa)>0$.

We begin by considering the case in which only the chemical reaction parameters are subject to uncertainty, which affects the entries of $J_R$.
 \begin{theorem}[Necessary condition]\label{Th:possible}
If system~\eqref{eq:fum} exhibits microphase separation for some $\Delta \in \mathcal{D}$, then $\Psi^+$ is initially positive.
\end{theorem}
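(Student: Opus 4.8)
The plan is to exploit the elementary link between the spectral abscissa of a real matrix and the sign of its determinant, applied in the small--wave-number regime where microphase separation forces stability.

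First I would fix a $\Delta \in \mathcal{D}$ for which system~\eqref{eq:fum} exhibits MS. By Definition~\ref{separation_condition} there is some $\hat\kappa>0$ with $\rho(\Delta,\kappa)<0$ for all $\kappa\in(0,\hat\kappa)$, i.e.\ the matrix $J(\kappa)=B\Delta C + |\kappa|^2 J_2 + |\kappa|^4 J_4$ is Hurwitz for every such $\kappa$; in particular $0$ is not one of its eigenvalues, so $\det J(\kappa)\neq 0$ there.

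The crux is the following linear-algebra fact: for a real Hurwitz matrix $M\in\mathbb{R}^{n\times n}$ one has $\det(-M)>0$, equivalently $(-1)^n\det M>0$. Indeed $\det(-M)=\prod_i\bigl(-\lambda_i(M)\bigr)$; the real eigenvalues of $M$ are strictly negative and thus contribute strictly positive factors, while the non-real eigenvalues occur in conjugate pairs $a\pm bi$ with $a<0$, each pair contributing $(-a)^2+b^2>0$, so the whole product is strictly positive. Applying this with $M=J(\kappa)$ yields $\det\bigl[-(B\Delta C + |\kappa|^2 J_2 + |\kappa|^4 J_4)\bigr]>0$ for all $\kappa\in(0,\hat\kappa)$.

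Finally I would pass from this particular $\Delta$ to the worst case: since $\Delta\in\mathcal{D}$, the definition~\eqref{Psi+} of $\Psi^+$ gives $\Psi^+(\kappa)\geq \det\bigl[-(B\Delta C + |\kappa|^2 J_2 + |\kappa|^4 J_4)\bigr]>0$ on $(0,\hat\kappa)$, which by Definition~\ref{SignChanges} is precisely the statement that $\Psi^+$ is initially positive. The only non-routine step is the determinant-sign fact above; I would stress that it is strictly stronger than the remark quoted in the text (a negative determinant forces $\rho(\Delta,\kappa)>0$), whose contrapositive alone would give only $\det\bigl[-(\cdot)\bigr]\geq 0$, not the strict inequality required by ``initially positive.'' No property of the $BDC$ structure or of the other assumptions beyond $\Delta\in\mathcal{D}$ is needed in this direction.
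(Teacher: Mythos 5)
Your proof is correct and follows essentially the same route as the paper: the paper's one-line argument is exactly the observation that $\rho(\Delta,\kappa)<0$ (Hurwitz) forces $\det[-(B\Delta C + |\kappa|^2 J_2 + |\kappa|^4 J_4)]>0$, which then lower-bounds the maximum $\Psi^+$ on $(0,\hat\kappa)$. You merely spell out the eigenvalue-product justification of the strict determinant inequality, which the paper states without proof, and your remark that the strict version (rather than the contrapositive of the ``negative determinant implies instability'' observation) is what is actually needed is accurate.
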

\begin{proof}
Since $ \rho(\Delta,\kappa)<0$ necessarily requires that
$\det [-(B\Delta  C + |\kappa|^2 J_2 + |\kappa|^4 J_4) ] >0$,
then the maximum $\Psi^+$ must be initially positive when $ \rho(\Delta,\kappa)$ is initially negative, as required by Definition~\ref{separation_condition}.
\end{proof}

In Fig.~\ref{Fig2}A we illustrate a case in which $\Psi^+$ is not initially positive, hence MS is not possible; conversely, in Fig.~\ref{Fig2}B, $\Psi^+$ is initially positive.

We now state the other main results, whose proof requires some technical lemmas and is thus reported in Section~\ref{sec:proofs}.
 
\begin{theorem}[Sufficient condition]\label{compatible} 
System~\eqref{eq:fum} exhibits microphase separation for some $\Delta \in \mathcal{D}$ if either $\Psi^-$ or $\Psi^+$ is initially positive and has a negative sign change.
\end{theorem}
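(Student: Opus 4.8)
The plan is to produce, under each of the two hypotheses, an explicit value $\Delta^\sharp\in\mathcal D$ — in fact a vertex of the box $\mathcal D$ — for which system~\eqref{eq:fum} undergoes microphase separation. This rests on the fact, standard in the parametric–robustness setting \cite{Barmish1994,BlaColGioZor2020}, that $\det[-(B\Delta C+|\kappa|^2J_2+|\kappa|^4J_4)]$ depends multi-affinely on the entries of $\Delta$ (each $\Delta_j$ multiplies a rank-one correction), so that for every fixed $\kappa$ the extrema over $\mathcal D$ are attained at vertices, and $\Psi^-$, $\Psi^+$ are the pointwise minimum and maximum of the finitely many polynomials $\kappa\mapsto\det[-(Bv_kC+|\kappa|^2J_2+|\kappa|^4J_4)]$, $v_k$ a vertex of $\mathcal D$. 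Consequently, on any subinterval of $(0,+\infty)$ — in particular on a right neighborhood of $0$ — each of $\Psi^\pm$ coincides with a single such polynomial, i.e. with $\det[-(B\Delta^\sharp C+|\kappa|^2J_2+|\kappa|^4J_4)]$ for one vertex $\Delta^\sharp$.

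Besides the elementary implication already used in Theorem~\ref{Th:possible}, namely $\det[-(B\Delta C+|\kappa|^2J_2+|\kappa|^4J_4)]<0\Rightarrow\rho(\Delta,\kappa)>0$, the argument uses two further facts. The first is a \emph{local converse}: for small $\kappa>0$ one has $\operatorname{sign}\det[-(B\Delta C+|\kappa|^2J_2+|\kappa|^4J_4)]=-\operatorname{sign}\rho(\Delta,\kappa)$, so that $\rho(\Delta,\cdot)$ is initially negative if and only if the corresponding determinant is initially positive. Indeed, by Assumption~\ref{zero_eig}, $J_R=B\Delta C$ has a simple eigenvalue at $0$ (the branch $\lambda_1(\Delta,\kappa)$, which stays real for small $\kappa$) and $n-1$ eigenvalues in the open left half-plane, whose product has the unambiguous sign $(-1)^{n-1}$; this is one of the technical lemmas of Section~\ref{sec:proofs}. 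The second fact is Lemma~\ref{tend_to_minus_inf}, which gives $\rho(\Delta,\kappa)<0$ for all sufficiently large $\kappa$.

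Assume first that $\Psi^+$ is initially positive with a negative sign change. Let $\Delta^\sharp$ be the vertex for which $\Psi^+$ equals $\det[-(B\Delta^\sharp C+|\kappa|^2J_2+|\kappa|^4J_4)]$ on a right neighborhood $(0,\hat\kappa)$ of $0$; there $\Psi^+>0$, so by the local converse $\rho(\Delta^\sharp,\cdot)$ is initially negative. The negative sign change of $\Psi^+$ yields $\kappa_1$ with $\Psi^+(\kappa_1)<0$, i.e. $\det[-(B\Delta C+\dots)]<0$ for \emph{every} $\Delta\in\mathcal D$, in particular for $\Delta^\sharp$; hence $\rho(\Delta^\sharp,\kappa_1)>0$, and necessarily $\kappa_1>\hat\kappa$ (after shrinking $\hat\kappa$ if needed). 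Finally Lemma~\ref{tend_to_minus_inf} provides $\kappa_2>\kappa_1$ with $\rho(\Delta^\sharp,\kappa_2)<0$, so by Definition~\ref{separation_condition} system~\eqref{eq:fum} exhibits MS for $\Delta=\Delta^\sharp$. Assume instead that $\Psi^-$ is initially positive with a negative sign change; pick $\kappa_1$ with $\Psi^-(\kappa_1)<0$ and let $\Delta^\sharp$ be a vertex attaining this minimum, so $\det[-(B\Delta^\sharp C+\dots)]<0$ at $\kappa_1$ and thus $\rho(\Delta^\sharp,\kappa_1)>0$. Since $\Psi^-$ is initially positive, $\det[-(B\Delta^\sharp C+\dots)]\ge\Psi^->0$ on a right neighborhood of $0$, so by the local converse $\rho(\Delta^\sharp,\cdot)$ is initially negative (hence $\kappa_1$ lies past that neighborhood); Lemma~\ref{tend_to_minus_inf} again supplies $\kappa_2>\kappa_1$ with $\rho(\Delta^\sharp,\kappa_2)<0$, and MS holds for $\Delta=\Delta^\sharp$.

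The delicate step is the local converse near $\kappa=0$: one must control the perturbation of the simple zero eigenvalue of $J_R$ — showing in particular that $\lambda_1(\Delta,\kappa)$ stays real and that, together with the unambiguous sign $(-1)^{n-1}$ of the product of the remaining stable eigenvalues, it fixes the sign of the determinant — thereby upgrading the one-directional implication ``$\det<0\Rightarrow\rho>0$'' to an equivalence in a neighborhood of the origin. Once this is available, all remaining steps reduce to continuity of eigenvalues and the intermediate value theorem.
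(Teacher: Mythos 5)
Your proof is correct and follows essentially the same route as the paper's: initial positivity of the determinant forces $\rho(\Delta,\cdot)$ to be initially negative (the paper's Lemma~\ref{key}, which you re-derive via the sign $(-1)^{n-1}$ of the product of the $n-1$ stable eigenvalues rather than via the coefficients of the characteristic polynomial), a negative determinant value forces $\rho(\Delta,\kappa)>0$, and Lemma~\ref{tend_to_minus_inf} supplies the final negative sign change. Your explicit extraction of a vertex $\Delta^\sharp$ is the same mechanism the paper defers to Proposition~\ref{prop:vertex} and Algorithm~\ref{values_algorithm}.
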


Actual parameter values for which MS does occur can be found following the procedure described in  Section~\ref{Sec:ComputePsi}.

In Figs.~\ref{Fig2}B and \ref{Fig2}C, we illustrate the case in which MS is possible for some $\Delta \in \mathcal{D}$.

\begin{theorem}[Robust sufficient condition]\label{Th:robust}
System~\eqref{eq:fum} exhibits microphase separation for all  $\Delta \in \mathcal{D}$ if $\Psi^-$ is initially positive (and hence $\Psi^+$ is initially positive too) and $\Psi^+$ has a negative sign change (and hence $\Psi^-$ has it too).
\end{theorem}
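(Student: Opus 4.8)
The plan is to prove, for every fixed $\Delta\in\mathcal{D}$ (so that the conclusion is automatically robust), the three features demanded by Definition~\ref{separation_condition}: initial negativity of $\rho(\Delta,\cdot)$, a positive sign change, and a negative sign change. The two ``hence'' clauses of the statement are immediate from $\Psi^-\le\Psi^+$ and $\Psi^-(0)=\Psi^+(0)=0$: $\Psi^-$ initially positive forces $\Psi^+$ initially positive, and if $\Psi^+(\kappa_1)>0>\Psi^+(\kappa_2)$ then $\Psi^-(\kappa_2)\le\Psi^+(\kappa_2)<0$, so $\Psi^-$, being initially positive, has a negative sign change as well. The real work is bridging the gap between the sign of the determinant and that of the spectral abscissa, which can be done faithfully only near $\kappa=0$.

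\emph{Step 1 (robust initial negativity).} I would first produce a \emph{uniform} right-neighborhood $(0,\kappa_0)$ of the origin on which, for all $\Delta\in\mathcal{D}$, $\operatorname{sign}\det[-J(\kappa,\Delta)]=\operatorname{sign}(-\lambda_1(\Delta,\kappa))$. By Assumption~\ref{zero_eig}, $0$ is a \emph{simple} eigenvalue of $J_R=B\Delta C$ and the other $n-1$ eigenvalues lie in a fixed half-plane $\{\Re z\le-\delta\}$, uniformly over the compact set $\mathcal{D}$ by continuity of the spectrum; hence for $|\kappa|<\kappa_0$ the perturbation $|\kappa|^2J_2+|\kappa|^4J_4$ keeps exactly $n-1$ eigenvalues in the open left half-plane and leaves the continuation $\lambda_1(\Delta,\kappa)$ of the zero eigenvalue inside a small isolating disk about $0$, where it stays real (it cannot acquire a nonzero imaginary part without its conjugate entering the same disk). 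Factoring $\det[-J(\kappa,\Delta)]=(-\lambda_1(\Delta,\kappa))\prod_{i\ge2}(-\lambda_i(\Delta,\kappa))$, the tail product is real and strictly positive (complex conjugate pairs contribute $|\lambda|^2$, real factors are $-\lambda_i>0$). Since $\Psi^-$ is initially positive on some $(0,\hat\kappa)$, on $(0,\min\{\kappa_0,\hat\kappa\})$ we get $\det[-J(\kappa,\Delta)]\ge\Psi^-(\kappa)>0$, hence $\lambda_1(\Delta,\kappa)<0$ and therefore $\rho(\Delta,\kappa)<0$ for every $\Delta$.

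\emph{Step 2 (positive and negative sign changes).} Pick $\kappa_1<\kappa_2$ with $\Psi^+(\kappa_1)>0>\Psi^+(\kappa_2)$. For every $\Delta$, $\det[-J(\kappa_2,\Delta)]\le\Psi^+(\kappa_2)<0$, and a negative value of this determinant forces at least one positive real eigenvalue (complex ones contribute a positive factor), so $\rho(\Delta,\kappa_2)>0$; combined with Step~1 this yields the positive sign change and places $\kappa_2$ strictly to the right of the interval of initial negativity. For the terminal negative sign change I would invoke Lemma~\ref{tend_to_minus_inf}: since $J_4$ is negative semidefinite and $J_I(\bar\kappa)$ is negative definite (Assumption~\ref{J2J4}), one has $J_2+|\kappa|^2J_4\prec0$ with top eigenvalue bounded above by a negative constant for $|\kappa|\ge\bar\kappa$, while $J_R=B\Delta C$ is bounded uniformly on $\mathcal{D}$, whence $\rho(\Delta,\kappa)\to-\infty$ uniformly in $\Delta$; so there is $\kappa_3>\kappa_2$ with $\rho(\Delta,\kappa_3)<0$ for all $\Delta$. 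Collecting Steps 1--2, every $\Delta\in\mathcal{D}$ admits a threshold $\hat\kappa'$ with $\rho(\Delta,\cdot)<0$ on $(0,\hat\kappa')$, then $\rho(\Delta,\kappa_2)>0$, then $\rho(\Delta,\kappa_3)<0$, with $0<\hat\kappa'<\kappa_2<\kappa_3$ — exactly Definition~\ref{separation_condition} — so robust MS follows.

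\emph{Main obstacle.} The delicate point is Step~1: this is the only place where the asymmetry between $\operatorname{sign}\det[-J(\kappa,\Delta)]$ and $\operatorname{sign}\rho(\Delta,\kappa)$ could break the argument, and it is defused only because near the origin exactly one eigenvalue can move, so there the determinant sign is faithful to $\rho$. Making this uniform over the whole box $\mathcal{D}$ requires the spectral-gap and isolating-disk estimates (simplicity of the zero eigenvalue for every $\Delta$, uniform separation from the rest, persistence of the zero branch on the real axis); once the uniform $\kappa_0$ is secured, the far-field behavior through Lemma~\ref{tend_to_minus_inf} and the intermediate sign change are routine.
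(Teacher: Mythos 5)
Your proof is correct and follows the same architecture as the paper's: $\Psi^-$ initially positive forces $\rho(\Delta,\cdot)$ to be initially negative for every $\Delta$; a point $\kappa_2$ with $\Psi^+(\kappa_2)<0$ forces $\rho(\Delta,\kappa_2)>0$ for every $\Delta$ (a negative determinant of $-J$ requires a positive real eigenvalue); and Lemma~\ref{tend_to_minus_inf} supplies the final negative sign change. The only substantive difference is in the first step: where the paper invokes Lemma~\ref{key}, proved via the coefficients of the characteristic polynomial ($p_0(0,\Delta)=0$, $p_1(0,\Delta)>0$, so the simple zero root moves left as the constant term becomes positive), you re-derive the same fact by factoring $\det[-J]=\prod_i(-\lambda_i)$ and tracking the isolated zero eigenvalue in a disk where it must remain real --- an equally valid and arguably cleaner route. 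Your insistence on a $\kappa_0$ uniform over the compact box $\mathcal{D}$ is a harmless strengthening but not needed, since Definition~\ref{separation_condition} allows the threshold $\hat\kappa$ to depend on $\Delta$.
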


In Fig.~\ref{Fig2}D, we illustrate the case in which MS is robustly guaranteed for all $\Delta \in \mathcal{D}$.

Finally, we briefly touch upon the case in which uncertainty affects the parameters of both spatial dynamics (i.e., the entries of $J_I$) and chemical dynamics (i.e., the entries of $J_R$).

\begin{corollary}\label{Cor:diagJ}
Assume that $J_2$ and $J_4$ are diagonal matrices, whose diagonal entries are uncertain parameters bounded in intervals. Then, redefining the functions  as 
$\Psi^-(\kappa) = \min_{\Delta,J_2,J_4} \det [-(B\Delta C + |\kappa|^2 J_2 + |\kappa|^4 J_4) ]$ and
$\Psi^+(\kappa) = \max_{\Delta,J_2,J_4} \det [-(B\Delta C + |\kappa|^2 J_2 + |\kappa|^4 J_4) ]$,
all the previous results still hold true.
\end{corollary}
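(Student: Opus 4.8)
The plan is to show that promoting the diagonal entries of $J_2$ and $J_4$ to extra uncertain parameters leaves intact every structural ingredient on which Theorems~\ref{Th:possible}--\ref{Th:robust} and the computational recipe of Section~\ref{Sec:ComputePsi} rely, so that those results apply verbatim to the enlarged parameter set. First I would record that, writing $J_2=\mathrm{diag}\{\alpha_1,\dots,\alpha_n\}$, $J_4=\mathrm{diag}\{\beta_1,\dots,\beta_n\}$, and letting $b_j$ denote the $j$-th column of $B$ and $c_j^{\top}$ the $j$-th row of $C$,
\[
-(B\Delta C+|\kappa|^2 J_2+|\kappa|^4 J_4)=-\sum_{j}\Delta_j\, b_j c_j^{\top}-\sum_{i}\bigl(|\kappa|^2\alpha_i+|\kappa|^4\beta_i\bigr)e_i e_i^{\top},
\]
a sum of rank-one terms, each scaled by a single scalar parameter. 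Hence, for every fixed $\kappa$, the determinant in \eqref{Psi-}--\eqref{Psi+} is a multiaffine function of the joint parameter $(\Delta_1,\dots,\Delta_m,\alpha_1,\dots,\alpha_n,\beta_1,\dots,\beta_n)$, which ranges over a box; so the redefined $\Psi^-(\kappa)$ and $\Psi^+(\kappa)$ are still attained at vertices of that box, are still easy to compute, and still satisfy $\Psi^-(\kappa)\le\Psi^+(\kappa)$.

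Next I would verify the boundary behaviour. At $\kappa=0$ the terms $|\kappa|^2 J_2+|\kappa|^4 J_4$ vanish regardless of $J_2,J_4$, so the matrix reduces to $B\Delta C$, which by Assumption~\ref{zero_eig} is singular for all $\Delta\in\mathcal D$; hence the redefined functions still obey $\Psi^-(0)=\Psi^+(0)=0$, and ``initially positive/negative'' and ``negative sign change'' in the sense of Definition~\ref{SignChanges} remain meaningful for them. The growth as $\kappa\to+\infty$ and the existence of a $\bar\kappa$ rendering $J_I(\bar\kappa)$ negative definite are preserved provided Assumption~\ref{J2J4} is imposed \emph{uniformly} over the box (every admissible $J_4$ negative semidefinite, and a common $\bar\kappa$ working for all admissible $J_2,J_4$), which by convexity in the $\alpha_i,\beta_i$ again reduces to a vertex check; this uniform reading of Assumption~\ref{J2J4} is the standing hypothesis behind the corollary.

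With these facts in place I would re-run the proofs of Theorems~\ref{Th:possible}, \ref{compatible} and \ref{Th:robust} with $\Delta\in\mathcal D$ replaced throughout by the joint parameter in the enlarged box and $\Psi^\pm$ replaced by the redefined envelopes. The necessary condition is immediate: initial negativity of $\rho$ for some triple $(\Delta,J_2,J_4)$ forces the determinant, and hence the redefined $\Psi^+$, to be initially positive. For the two sufficient conditions, the point is that the technical lemmas of Section~\ref{sec:proofs} invoke only: joint continuity of $\rho$ and of the determinant in the parameters and in $\kappa$; the box/multiaffine structure of the parameter dependence; the implication $\det[-(B\Delta C+|\kappa|^2 J_2+|\kappa|^4 J_4)]<0\Rightarrow\rho>0$; and the boundary data at $\kappa=0$ and $\kappa\to+\infty$ --- all of which are unchanged by the enlargement.

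The main obstacle is exactly this last step: to confirm that nothing in the Section~\ref{sec:proofs} arguments secretly uses $J_2,J_4$ being \emph{fixed} --- for instance, a monotonicity argument along a single parameter axis, or a connectedness/continuation argument over the parameter box, must still go through once the box acquires the extra coordinates --- and, on the modelling side, to state precisely the uniform version of Assumption~\ref{J2J4} without which initial positivity and the $\kappa\to+\infty$ growth could break at some vertex. Once these checks are done, the three theorems transfer word for word and Corollary~\ref{Cor:diagJ} follows.
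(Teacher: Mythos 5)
Your proposal is correct and follows essentially the same route as the paper, whose entire proof is the observation that the determinant remains a multilinear function of all the considered parameters when $J_2$ and $J_4$ are diagonal, so the vertex/envelope machinery and the arguments of Theorems~\ref{Th:possible}--\ref{Th:robust} carry over verbatim. Your additional care in spelling out the rank-one decomposition, the boundary behaviour at $\kappa=0$ and $\kappa\to+\infty$, and the need to read Assumption~\ref{J2J4} uniformly over the enlarged box is a useful elaboration of what the paper leaves implicit, not a different method.
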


\subsection{Proofs of the main results}\label{sec:proofs}
We begin with two technical lemmas. The first states that, for large enough $\kappa$, we have 
Hurwitz stability.
\begin{lemma}\label{tend_to_minus_inf}
For any given $\Delta$,
$\lim_{\kappa\to+\infty} \rho(\Delta,\kappa) = -\infty$.
\end{lemma}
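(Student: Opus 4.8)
The plan is to bound the spectral abscissa $\rho(\Delta,\kappa)$ by the largest eigenvalue of the symmetric part of $J(\kappa)$, and then to show that this quantity diverges to $-\infty$ using Assumption~\ref{J2J4}. I would start from the elementary fact that the spectral abscissa of any real square matrix $A$ is at most $\lambda_{\max}\!\left(\frac{A+A^\top}{2}\right)$: if $Av=\lambda v$ with $\|v\|=1$, then $\Re(\lambda)=v^{*}\frac{A+A^\top}{2}v\le\lambda_{\max}\!\left(\frac{A+A^\top}{2}\right)$, since $\frac{A+A^\top}{2}$ is symmetric. Applying this with $A=J(\kappa)=B\Delta C+|\kappa|^{2}J_2+|\kappa|^{4}J_4$ and using that $J_2,J_4$ are symmetric, the symmetric part of $J(\kappa)$ equals $S_R+J_I(\kappa)$, where $S_R:=\frac{1}{2}\big(B\Delta C+(B\Delta C)^\top\big)$ is a \emph{fixed} symmetric matrix (here $\Delta$ is given) and $J_I(\kappa)=|\kappa|^{2}J_2+|\kappa|^{4}J_4$. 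By Weyl's inequality $\lambda_{\max}(S_R+J_I(\kappa))\le\lambda_{\max}(S_R)+\lambda_{\max}(J_I(\kappa))$, so it suffices to prove $\lambda_{\max}(J_I(\kappa))\to-\infty$.

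For the latter I would set $s:=|\kappa|^{2}$ and $\bar s:=|\bar\kappa|^{2}$, where $\bar\kappa$ is the point provided by Assumption~\ref{J2J4} (necessarily $\bar\kappa\neq0$, since $J_I(0)=0$ is not negative definite, hence $\bar s>0$), and use the algebraic identity $sJ_2+s^{2}J_4=\frac{s}{\bar s}\big(\bar sJ_2+\bar s^{2}J_4\big)+s(s-\bar s)J_4$. For $s\ge\bar s$ the term $s(s-\bar s)J_4$ is negative semidefinite, because $J_4\preceq0$ and $s(s-\bar s)\ge0$; and, setting $-\alpha:=\lambda_{\max}\big(J_I(\bar\kappa)\big)<0$ (negative since $J_I(\bar\kappa)=\bar sJ_2+\bar s^{2}J_4$ is negative definite), we have $\frac{s}{\bar s}\big(\bar sJ_2+\bar s^{2}J_4\big)\preceq-\frac{\alpha}{\bar s}\,s\,I$. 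Hence $J_I(\kappa)\preceq-\frac{\alpha}{\bar s}\,s\,I$ for all $\kappa$ with $|\kappa|\ge|\bar\kappa|$, so $\lambda_{\max}(J_I(\kappa))\le-\frac{\alpha}{|\bar\kappa|^{2}}|\kappa|^{2}\to-\infty$. Combining the two steps gives $\rho(\Delta,\kappa)\le\lambda_{\max}(S_R)-\frac{\alpha}{|\bar\kappa|^{2}}|\kappa|^{2}\to-\infty$, which is the claim.

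I would close with the short remark needed right after the lemma: for $|\kappa|$ large enough the bound above is at most $-\frac{\alpha}{2|\bar\kappa|^{2}}|\kappa|^{2}$, so every eigenvalue of $J(\kappa)$ has modulus at least $\frac{\alpha}{2|\bar\kappa|^{2}}|\kappa|^{2}$; since $\det(-J(\kappa))=\prod_i(-\lambda_i(\Delta,\kappa))>0$ (the complex eigenvalues occur in conjugate pairs and, for $|\kappa|$ large, all eigenvalues have negative real part), it follows that $\det(-J(\kappa))\ge\big(\frac{\alpha}{2|\bar\kappa|^{2}}\big)^{n}|\kappa|^{2n}\to+\infty$, and this holds uniformly for $\Delta\in\mathcal{D}$ because $\alpha$ is independent of $\Delta$ and $\lambda_{\max}(S_R)$ is bounded over the compact set $\mathcal{D}$ — hence $\Psi^{-}$ and $\Psi^{+}$ both tend to $+\infty$.

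The only genuine obstacle is that $J_R=B\Delta C$ need not be symmetric and $J_4$ is merely negative \emph{semi}definite, so neither a crude norm bound on $J_R$ nor the $|\kappa|^{4}$ term alone forces stability: passing to the symmetric part removes the first difficulty, and the convex‑combination identity above — which is exactly where the single point $\bar\kappa$ with $J_I(\bar\kappa)\prec0$ granted by Assumption~\ref{J2J4} enters — removes the second.
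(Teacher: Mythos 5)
Your proof is correct and follows essentially the same route as the paper's: both bound the spectral abscissa by the largest eigenvalue of the symmetric part of $J(\kappa)$ and both exploit the point $\bar\kappa$ from Assumption~\ref{J2J4} through the same inequality $|\kappa|^2 J_2+|\kappa|^4 J_4\preceq \frac{|\kappa|^2}{|\bar\kappa|^2}J_I(\bar\kappa)$ for $|\kappa|\ge|\bar\kappa|$ to tame the merely semidefinite $J_4$. The only (harmless) difference is how the divergence is extracted: the paper reruns the Lyapunov inequality for the shifted matrix $\kappa I+J(\kappa)$ to conclude $\rho<-\kappa$, whereas you obtain the sharper explicit bound $\rho(\Delta,\kappa)\le\lambda_{\max}(S_R)-\frac{\alpha}{|\bar\kappa|^2}|\kappa|^2$ via Weyl's inequality, which also yields the uniformity over $\mathcal{D}$ and the divergence of $\Psi^\pm$ noted after the lemma.
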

\begin{proof} Recall that $J_4$ is negative semi-definite and take $\kappa \geq \bar \kappa$, with $\bar \kappa$ defined in Assumption~\ref{J2J4}. The Lyapunov inequality
\begin{eqnarray}
 &&\left( J_R + J_2 |\kappa|^2 + J_4 |\kappa|^4 \right)^\top +  
 \left( J_R + J_2 |\kappa|^2 + J_4 |\kappa|^4 \right)= \nonumber\\
 && (B\Delta C)^\top +B\Delta C + 2 J_2 |\kappa|^2 +2 J_4 |\kappa|^4 \leq \nonumber\\
 && (B\Delta C)^\top +B\Delta C + 2 J_2 |\kappa|^2 +2 J_4 |\kappa|^2  |\bar \kappa|^2 =\nonumber\\
 && |\kappa|^2 \left [\frac{(B\Delta C)^\top +B\Delta C}{|\kappa|^2} + 2  ( J_2   + J_4 |\bar \kappa|^2) \right ] <0 \label{lequa}
 \end{eqnarray}
holds for $\kappa$ large, because $( J_2   + J_4 |\bar \kappa|^2)$ is negative definite and
 the fraction converges to $0$. This implies Hurwitz stability for $\kappa$ large enough.
 To prove that $\lim_{\kappa\to\infty}\rho(\Delta,\kappa) = -\infty$,
denote $K = B\Delta C+\kappa I$ and repeat the computation for the perturbed matrix $J_\kappa =\left( \kappa I + J_R + J_2 |\kappa|^2 + J_4 |\kappa|^4 \right)$ to get
 \begin{equation}
J_\kappa^\top +  
 J_\kappa \leq |\kappa|^2 \left [\frac{K^\top +K}{|\kappa|^2} + 2  ( J_2   + J_4 |\bar \kappa|^2) \right ] <0
 \end{equation}
 for $\kappa$ large, hence $J_\kappa$ is Hurwitz. The spectral abscissa of $\left(J_R + J_2 |\kappa|^2 + J_4 |\kappa|^4 \right)$ is thus less than $-\kappa$, for $\kappa$ large.
\end{proof}

Since the determinant of the negative of a matrix of size $n$ is $(-1)^n$ times the product of the matrix eigenvalues, $\lim_{\kappa\to+\infty} \rho(\Delta,\kappa) = -\infty$ implies that $\lim_{\kappa\to+\infty} \Psi^-(\kappa)=\lim_{\kappa\to+\infty} \Psi^+(\kappa)= +\infty$.

\begin{remark}
If we take bounds $\Delta^-_i >0$ and $\Delta^+_i< \infty$, by compactness the limit in Lemma~\ref{tend_to_minus_inf} is uniform in $\Delta \in \mathcal{D}$.
\end{remark}

Due to Lemma~\ref{tend_to_minus_inf}, if $\rho(\Delta,\kappa)$ has a positive sign change, then it needs to have a subsequent negative sign change.

\begin{lemma}\label{key}
Under Assumption~\ref{zero_eig}, if for some $\Delta$ we have that
$\det [-(B\Delta  C + |\kappa|^2 J_2 + |\kappa|^4 J_4) ]$
is initially positive, as a function of $\kappa$,
  then there exists $\tilde \kappa>0$ 
such that $\rho(\Delta,\kappa)<0$ for $0<\kappa \leq \tilde \kappa$.
\end{lemma}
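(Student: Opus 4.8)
The plan is to track the eigenvalues of $J(\kappa)=B\Delta C+|\kappa|^2J_2+|\kappa|^4J_4$ as $\kappa$ increases from $0$, exploiting that at $\kappa=0$ the only eigenvalue on the imaginary axis is a \emph{simple} eigenvalue at the origin. Indeed, by Assumption~\ref{zero_eig} together with the assumption $\lambda_1(\Delta,0)=0$, the matrix $J(0)=B\Delta C$ has a simple eigenvalue at $0$ and its remaining $n-1$ eigenvalues lie in the open left half-plane. Writing the characteristic polynomial as $p(\lambda,\kappa)=\det(\lambda I-J(\kappa))$, its coefficients are polynomials in $|\kappa|^2$, hence continuous in $\kappa$, and $p(\lambda,0)=\lambda\,q(\lambda)$ with $q$ Hurwitz of degree $n-1$, so $q(0)\neq0$. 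By continuity of the roots of a polynomial with respect to its coefficients, there is $\kappa_0>0$ such that for $0\le\kappa<\kappa_0$ exactly one root $\lambda_1(\Delta,\kappa)$ lies in a small disk around the origin while the remaining $n-1$ roots stay in the open left half-plane; moreover, since $p$ has real coefficients and $0$ is a simple root, after possibly shrinking $\kappa_0$ the branch $\lambda_1(\Delta,\kappa)$ is \emph{real} (a nonreal value would force its conjugate to be a second root near the origin, contradicting simplicity).

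Next I would use the factorization $\det[-J(\kappa)]=\prod_{i=1}^n(-\lambda_i(\Delta,\kappa))=(-\lambda_1(\Delta,\kappa))\prod_{i=2}^n(-\lambda_i(\Delta,\kappa))$. For $0\le\kappa<\kappa_0$, the eigenvalues $\lambda_2,\dots,\lambda_n$ are negative reals or complex-conjugate pairs with negative real part; each real factor $-\lambda_i$ is then positive and each conjugate pair contributes $(-\lambda)(-\bar\lambda)=|\lambda|^2>0$, so $\prod_{i=2}^n(-\lambda_i(\Delta,\kappa))$ is a strictly positive real number. Hence, on $(0,\kappa_0)$, the sign of $\det[-J(\kappa)]=\det[-(B\Delta C+|\kappa|^2J_2+|\kappa|^4J_4)]$ coincides with the sign of $-\lambda_1(\Delta,\kappa)$.

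To conclude, the hypothesis that this determinant is initially positive means it is positive on some interval $(0,\hat\kappa)$. Choosing any $\tilde\kappa$ with $0<\tilde\kappa<\min\{\hat\kappa,\kappa_0\}$, on $(0,\tilde\kappa]$ we get $-\lambda_1(\Delta,\kappa)>0$, i.e.\ $\lambda_1(\Delta,\kappa)<0$, while $\Re\,\lambda_i(\Delta,\kappa)<0$ for $i=2,\dots,n$; therefore $\rho(\Delta,\kappa)=\max_i\Re\,\lambda_i(\Delta,\kappa)<0$ for $0<\kappa\le\tilde\kappa$, as claimed. Note that Assumption~\ref{J2J4} plays no role here, only the trivial continuity of $J(\kappa)$ in $\kappa$.

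The step I expect to be the main obstacle — or at least the one demanding the most care — is the localization of eigenvalues near $\kappa=0$: making rigorous that a single eigenvalue branch emanates from the origin, that it stays real and simple for small $\kappa$, and that the other $n-1$ eigenvalues do not reach the imaginary axis. This is a standard consequence of continuity of polynomial roots combined with the simplicity of the zero eigenvalue from Assumption~\ref{zero_eig}, but it is exactly what guarantees that the sign of the (easily computable) determinant faithfully reports the sign of the critical eigenvalue; the determinant factorization and the final sign bookkeeping are then routine.
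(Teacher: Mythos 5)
Your proof is correct and follows essentially the same route as the paper's: both isolate the simple eigenvalue branch emanating from $0$ (the other $n-1$ eigenvalues remaining in the open left half-plane by continuity of roots) and read off the sign of that critical branch from the sign of the determinant. The only cosmetic difference is the final step: you use the factorization $\det[-J(\kappa)]=\prod_i(-\lambda_i)$ with the product over $i\geq 2$ positive, whereas the paper argues via $p(0,\kappa,\Delta)=\det[-J(\kappa)]>0$ together with $p_1(\kappa,\Delta)>0$ (local monotonicity of the characteristic polynomial at $s=0$), two equivalent ways of concluding $\lambda_1(\Delta,\kappa)<0$.
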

\begin{proof}
Consider the characteristic polynomial
\begin{equation}\label{eq:charpoly}    
p(s,\kappa,\Delta) = 
\det [sI-(B\Delta  C + |\kappa|^2 J_2 + |\kappa|^4 J_4) ].
\end{equation}
For $\kappa=0$, $\lambda_1=0$ is an isolated root of the characteristic polynomial $p(s,0,\Delta)$, while all the other roots $\lambda_i$, $i >1$, have negative real parts. In view of the continuity of the eigenvalues with respect to $\kappa$,
for a small, positive $\kappa$ the roots 
$\lambda_i$, $i >1$, still have negative real part.
Hence, the spectral abscissa is given by the dominant real eigenvalue $\lambda_1$ and
we show that, for a small, positive $\kappa$, $\lambda_1$ becomes negative, hence $\rho(\Delta,\kappa)<0$.
We can write the characteristic polynomial as
$$p(s,\kappa,\Delta) = s^n + p_{n-1}(\kappa,\Delta) s^{n-1} + \dots + p_1 (\kappa,\Delta) s+p_0 (\kappa,\Delta).
$$
For $\kappa=0$, the constant term $p_0 (0,\Delta)=0$, due to the zero root, while
$p_j (0,\Delta)>0$ for $j>0$, since all other roots have negative real part in view of Assumption~\ref{zero_eig}.
Hence the derivative of $p$ computed at $s=0$, assuming $s$ real, is positive:
$\left. \frac{d}{ds}p(s,\kappa,\Delta)\right|_{s=0} =  p_1 (\kappa,\Delta) > 0$.
For a small, positive $\kappa$ in a neighborhood of $0$, the polynomial $p(s,\kappa,\Delta)$
becomes positive and, since it is locally increasing in $\kappa$,
the root $\lambda_1$, which is initially $0$,  moves to the left and becomes negative.
\end{proof}

We are now ready to prove Theorem~\ref{compatible}.

\begin{proof}
If $\Psi^-$ is initially positive,
then for all $\Delta$, $p(s,\kappa,\Delta)$ defined in \eqref{eq:charpoly} is initially positive.
Lemma~\ref{key} ensures that $ \rho(\Delta,\kappa)<0$ in a right neighborhood of zero,
for all $\Delta$.
On the other hand, $\Psi^-$ becomes negative for some larger $\kappa$, meaning
that  for some $\Delta^*$, $\rho(\Delta^*,\kappa)>0$, so 
$\rho(\Delta^*,\kappa)$ has a positive sign change.
Then, $\rho$ will have
a negative sign change $\rho(\Delta^*,\kappa)<0$ for $\kappa$ large
in view of Lemma~\ref{tend_to_minus_inf}.

If $\Psi^+$ is initially positive,
then for some $\Delta^*$, $p(s,\kappa,\Delta^*)$ is initially positive.
In view of Lemma~\ref{key}, $ \rho(\Delta^*,\kappa)<0$ in a right neighborhood of zero,
for all $\Delta$.
On the other hand, $\Psi^+$ becomes negative for some larger $\kappa$, meaning
that  $\rho(\Delta,\kappa)>0$ for all $\Delta$. Then $\rho(\Delta^*,\kappa)<0$
has a positive sign change, and eventually  will have a negative sign change $\rho(\Delta^*,\kappa)<0$ for $\kappa$ large, again,
in view of Lemma~\ref{tend_to_minus_inf}.
\end{proof}

Finally, we prove Theorem~\ref{Th:robust}.

\begin{proof}
If  $\Psi^-$ is initially positive, then $\rho(\Delta,\kappa)$ is initially negative for all $\Delta$. If $\Psi^+$ has a negative sign change, then  $\rho(\Delta,\kappa)$ becomes positive for all $\Delta$ 
(before becoming eventually negative).
\end{proof}

\begin{remark}
The value $\kappa^*$ at which $\rho(\Delta,\kappa)$ becomes positive
(transition to instability) clearly depends on $\Delta$. 
In general, it is not possible to discriminate whether the instability occurs due to the appearance of real or complex eigenvalues, unless $J_R$ is Metzler, as in the case of a network of mono-molecular reactions, and $J_I$ is diagonal, so that the dominant eigenvalue is real. In general, a supplementary analysis can be performed adopting value-set techniques \cite{Barmish1994} to possibly rule out imaginary eigenvalues.
\end{remark}

Corollary~\ref{Cor:diagJ} can be proven by repeating the same arguments as above, since 
the determinant is a multilinear function of all the considered parameters if  
matrices $J_2$ and $J_4$ are diagonal.


\subsection{Computing $\Psi^-$ and $\Psi^+$ and checking criteria}\label{Sec:ComputePsi}
Computing $\Psi^-$ and $\Psi^+$ is simple,
since they are respectively the minimum and the maximum of $2^m$ polynomials.
\begin{proposition}\label{prop:vertex}
Given the set ${\mathcal{D}}$ in \eqref{eq:Dbounds}, let $\hat{\mathcal{D}}$ be the set of all its vertices:
$$
\hat{\mathcal{D}} = \left \{\Delta: \Delta_{j} \in \{\Delta_{j}^-, \Delta_{j}^+\} \right \}.
$$
Then, the functions $\Psi^-$ and $\Psi^+$, respectively defined in \eqref{Psi-} and \eqref{Psi+}, can be computed as
\begin{equation}    
\Psi^-(\kappa)  = \min_{i} p_i(\kappa), \qquad \Psi^+(\kappa) = \max_{i} p_i(\kappa),
\end{equation}
where
\begin{equation*}
p_i(\kappa) = \det [-(B\Delta^{(i)}  C + |\kappa|^2 J_2 + |\kappa|^4 J_4) ]
\end{equation*}
with $\Delta^{(i)}  \in \hat{\mathcal{D}}$, $i=1,2,\dots,2^m$.

The curves $\Psi^-$ and $\Psi^+$ are thus piecewise polynomials: there exists a finite set of values $\kappa_1,\kappa_2,\dots,\kappa_M$
for which $\Psi^-$ (analogously, $\Psi^+$) is a polynomial in each interval $[\kappa_h,\kappa_{h+1}]$. Hence, $\Psi^-$ and $\Psi^+$ are piecewise differentiable.
\end{proposition}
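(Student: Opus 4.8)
The plan is to exploit the $BDC$ decomposition $J_R = B\Delta C$ together with the multilinearity of the determinant in the diagonal entries $\Delta_1,\dots,\Delta_m$ of $\Delta$. First I would fix $\kappa$ and regard $g(\Delta) := \det[-(B\Delta C + |\kappa|^2 J_2 + |\kappa|^4 J_4)]$ as a function of the vector $(\Delta_1,\dots,\Delta_m)$ alone. Since $B\Delta C = \sum_{j=1}^m \Delta_j\, B_j C^j$ (where $B_j$ is the $j$-th column of $B$ and $C^j$ the $j$-th row of $C$), the matrix inside the determinant is affine in each $\Delta_j$ separately, with no $\Delta_j^2$ terms appearing because $\Delta_j$ enters only through the single rank-one summand $\Delta_j B_j C^j$. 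Expanding the determinant by multilinearity of its rows then shows that $g$ is an affine (in fact multiaffine) function of $(\Delta_1,\dots,\Delta_m)$: it is affine in each coordinate when the others are held fixed.

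The second step is the standard vertex argument for multiaffine functions on a box. A function that is affine in each coordinate separately attains its extrema over the box $\mathcal{D} = \prod_j [\Delta_j^-,\Delta_j^+]$ at a vertex of the box: fixing all coordinates but one, the restriction is affine and hence monotone on the interval $[\Delta_j^-,\Delta_j^+]$, so one may push $\Delta_j$ to an endpoint without decreasing (resp.\ increasing) $g$; iterating over $j=1,\dots,m$ drives $\Delta$ to some $\Delta^{(i)} \in \hat{\mathcal{D}}$. This yields, for each fixed $\kappa$, $\Psi^-(\kappa) = \min_i p_i(\kappa)$ and $\Psi^+(\kappa) = \max_i p_i(\kappa)$ with $p_i(\kappa) = \det[-(B\Delta^{(i)} C + |\kappa|^2 J_2 + |\kappa|^4 J_4)]$, as claimed. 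Note each $p_i(\kappa)$ is a genuine polynomial in $|\kappa|^2$ (degree at most $2n$ in $|\kappa|$) since the determinant of an $n\times n$ matrix affine in $|\kappa|^2$ and $|\kappa|^4$ is polynomial in those arguments.

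For the piecewise-polynomial and piecewise-differentiability claims, I would argue as follows. $\Psi^+$ is the pointwise maximum of the finitely many polynomials $p_1,\dots,p_{2^m}$ (and $\Psi^-$ their pointwise minimum). Any two distinct polynomials $p_i, p_j$ agree on only finitely many values of $\kappa$ (their difference, if not identically zero, has finitely many roots; pairs with $p_i \equiv p_j$ contribute nothing new), so the union over all pairs of these crossing points is a finite set $\{\kappa_1,\dots,\kappa_M\}$. On each open interval between consecutive such points, the ordering of the values $p_1(\kappa),\dots,p_{2^m}(\kappa)$ is constant, so the index achieving the maximum (resp.\ minimum) is constant, and hence $\Psi^+$ (resp.\ $\Psi^-$) coincides with a single polynomial there; this gives the piecewise-polynomial structure and, a fortiori, piecewise differentiability (with the only possible non-differentiability at the finitely many $\kappa_h$).

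The main obstacle is the first step: one must be careful that $g$ is genuinely \emph{affine}, not merely polynomial, in each $\Delta_j$ — this is exactly where the rank-one structure of each summand $\Delta_j B_j C^j$ in the $BDC$ decomposition is essential, since a general dependence of the matrix entries on $\Delta_j$ could introduce higher powers and break the vertex property. Everything after that is routine: the vertex reduction and the max/min-of-polynomials bookkeeping are standard parametric-robustness arguments (cf.\ \cite{Barmish1994,BlaColGioZor2020}).
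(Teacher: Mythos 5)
Your proposal is correct and follows essentially the same route as the paper: multilinearity of the determinant in the $\Delta_j$ (which the paper asserts with references and you justify explicitly via the rank-one summands $\Delta_j B_j C^j$), the standard vertex result for multiaffine functions on a box, and the finiteness of pairwise intersections of distinct polynomials to obtain the piecewise-polynomial structure.
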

\begin{proof}
For any value of $\kappa$, the maximum and minimum in \eqref{Psi-} and \eqref{Psi+} are achieved on the vertices $\hat{\mathcal{D}}$
of the hyper-rectangle ${\mathcal{D}}$, because the determinants are multilinear functions of the uncertain parameters $\Delta_i$ and any multilinear function defined on a hyper-rectangle achieves both its minimum and its maximum value on a vertex \cite{Barmish1994,BlaColGioZor2020,BlaColGioZor2022}.

The fact that the curves $\Psi^-$ and $\Psi^+$ are piecewise polynomials follows from the fact that two different polynomials of order $N$ can intersect in at most $N$ points. 
\end{proof}

It can be immediately seen that Proposition~\ref{prop:vertex} allows us to check the conditions of Theorems~\ref{Th:possible}, \ref{compatible} and \ref{Th:robust} as follows.
 \begin{proposition} The following equivalences hold.
\begin{itemize}
\item $\Psi^-$ is initially positive iff all the polynomials $p_i(\kappa)$ are initially positive.
\item $\Psi^+$ is initially positive iff at least one of the polynomials $p_i(\kappa)$ is initially positive.
\item $\Psi^-$ takes negative values iff at least one of the polynomials $p_i(\kappa)$ does.
\item $\Psi^+$ takes negative values iff all the polynomials $p_i(\kappa)$ do.
\end{itemize}
\end{proposition}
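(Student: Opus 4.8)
The plan is to establish each of the four equivalences by combining the vertex characterization from Proposition~\ref{prop:vertex}, which identifies $\Psi^-$ and $\Psi^+$ as the pointwise minimum and maximum of the finite family $\{p_i(\kappa)\}_{i=1}^{2^m}$, with the definition of ``initially positive'' and ``takes negative values'' from Definition~\ref{SignChanges}. The core observation is that both $\Psi^-$ and $\Psi^+$ are continuous (indeed piecewise polynomial), so their behavior in a right neighborhood of $0$ and their global sign are governed by the corresponding behavior of the finitely many polynomials $p_i$.

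For the statement about $\Psi^-$ being initially positive: since $\Psi^-(\kappa) = \min_i p_i(\kappa)$, if $\Psi^-$ is positive on some interval $(0,\hat\kappa)$ then every $p_i(\kappa) \geq \Psi^-(\kappa) > 0$ there, so each $p_i$ is initially positive. Conversely, if each $p_i$ is initially positive, choose for each $i$ a right neighborhood $(0,\hat\kappa_i)$ on which $p_i>0$; taking $\hat\kappa = \min_i \hat\kappa_i > 0$ (a minimum of finitely many positive numbers), we get $\Psi^-(\kappa) = \min_i p_i(\kappa) > 0$ on $(0,\hat\kappa)$. The argument for $\Psi^+ = \max_i p_i$ being initially positive is the dual one: $\Psi^+$ is positive near $0$ iff the pointwise max is positive near $0$ iff at least one $p_i$ stays positive on a right neighborhood of $0$; here one must be slightly careful that ``at least one $p_i$ is initially positive'' is genuinely what is needed, rather than ``for each $\kappa$ near $0$ some $p_i(\kappa)>0$'', but since there are only finitely many indices, by a pigeonhole/shrinking argument one index works on a whole neighborhood.

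For the statements about taking negative values: $\Psi^- = \min_i p_i$ attains a negative value at some $\kappa$ iff the minimum over $i$ is negative there iff some $p_i(\kappa)<0$; so $\Psi^-$ takes negative values (somewhere) iff at least one $p_i$ does. Dually, $\Psi^+ = \max_i p_i$ is negative at $\kappa$ iff every $p_i(\kappa)<0$; hence $\Psi^+$ takes a negative value at some $\kappa$ iff there exists $\kappa$ at which all $p_i$ are simultaneously negative, which is exactly the condition ``all the polynomials $p_i(\kappa)$'' take negative values, understood in the sense that there is a common $\kappa$ making them all negative. I would phrase this last equivalence carefully, noting that ``all $p_i$ take negative values'' here means they are all negative at a common argument, which follows because they are the same sign-change quantities being evaluated and the negative sign change of $\Psi^+$ past its initial positive region is realized on an interval.

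The main obstacle is purely one of careful bookkeeping with quantifier order and the finiteness of the index set: the equivalences are immediate for a fixed $\kappa$, but the notions ``initially positive'' and ``has a negative sign change'' are statements about neighborhoods and about existence of a suitable $\kappa$, so one must pass from pointwise statements to neighborhood statements using that $\min$ and $\max$ over a \emph{finite} family preserve continuity and that a finite intersection of right neighborhoods of $0$ is again a right neighborhood of $0$. Once this is observed, no real computation is required, and the proof is a short deduction from Proposition~\ref{prop:vertex} together with the elementary identities $\min_i p_i(\kappa) > 0 \iff \forall i\, p_i(\kappa)>0$ and $\max_i p_i(\kappa) < 0 \iff \forall i\, p_i(\kappa)<0$.
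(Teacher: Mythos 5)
Your proof is correct and follows exactly the route the paper intends: the paper offers no written proof, stating only that the equivalences are ``immediately seen'' from the vertex characterization $\Psi^-=\min_i p_i$, $\Psi^+=\max_i p_i$, and your argument is precisely that deduction spelled out, including the two points that genuinely need care (that ``initially positive'' for $\Psi^+$ reduces to a single index because each $p_i$ is a polynomial with eventually constant sign on a right neighborhood of $0$, and that the fourth item must be read as all $p_i$ being negative at a \emph{common} $\kappa$). No discrepancy with the paper's approach.
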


Remarkably, checking the conditions for MS requires the 
analysis of a finite number of polynomials.

When the sufficient conditions of Theorem~\ref{compatible} are met, and thus the considered CRN structure can give rise to MS \emph{provided that the parameters are suitably chosen}, the following algorithm allows us to identify values that do ensure MS.

\begin{algorithm}\label{values_algorithm}
To find parameter values ensuring MS:
\begin{itemize}
\item If $\Psi^-$ is initially positive and has a negative sign change,
find $\kappa^*$ for which $\Psi^-(\kappa^*)<0$.
Then, the vertex polynomial $p_h$ such that $p_h(\kappa^*)=\Psi^-(\kappa^*)<0$ is associated with vertex parameters $\Delta_{sep}^* \in \hat{\mathcal{D}}$ ensuring MS.
\item If $\Psi^+$ is initially positive and has a negative sign change,
find $\tilde \kappa$ such that, for $0 < \kappa \leq \tilde \kappa$, the vertex polynomial $p_h(\kappa)=\Psi^+(\kappa)>0$.
Then, $p_h$ is associated with vertex parameters $\tilde  \Delta_{sep} \in \hat{\mathcal{D}}$
ensuring MS.
\end{itemize}
\end{algorithm}

\section{Testing Microphase Separation in the Presence of Chemical Reactions}

We provide here a collection of examples where we apply our proposed criteria to test robust microphase separation in the presence of different chemical reaction network structures. In all the considered examples, we assume
$|\kappa|^2 J_2 + |\kappa|^4 J_4 = \mbox{diag}\begin{bmatrix} \mu |\kappa|^2  - \gamma^2 |\kappa|^4  & -d|\kappa|^2  & -d|\kappa|^2 & \dots & -d|\kappa|^2 \end{bmatrix}$,
so that the separating species is the first one ($C_1$), and we take $\mu=1$, $d=1$ and $\gamma^2=0.03$.

\begin{example}\label{ex:impossible}
Consider the CRN
$C_1 + C_3 \rightharpoonup C_2 + C_4$, $C_2   \rightharpoonup  C_1$, $C_4  \rightharpoonup \emptyset$, $\emptyset \rightharpoonup C_3$.
The CRN system is formed by equations $\dot c_1 = -g_{13}(c_1,c_3) + g_2(c_2) = -\dot c_2$, $\dot c_3 = -g_{13}(c_1,c_3) + c_0$, $\dot c_4 = +g_{13}(c_1,c_3) - g_4(c_4)$, corresponding to the $BDC$-decomposable Jacobian matrix
\begin{equation*}
J=\underbrace{\begin{bmatrix}
     -1  &   1  &  -1  &   0\\
     1  &  -1  &   1   &  0\\
    -1   &  0  &  -1  &   0\\
     1   &  0  &   1  &  -1
     \end{bmatrix}}_{B} ~~\Delta ~~
     \underbrace{\begin{bmatrix}
      1   &  0  &   0  &   0\\
     0   &  1  &   0  &   0\\
     0  &   0    & 1  &   0\\
     0  &   0  &   0  &   1
     \end{bmatrix}}_{C},
\end{equation*}
where
$\Delta=\mbox{diag} \begin{bmatrix}  \frac{\partial g_{13}}{\partial c_1}   &  \frac{\partial g_{2}}{\partial c_2}   &  \frac{\partial g_{13}}{\partial c_3}    &   \frac{\partial g_4}{\partial c_4}\end{bmatrix}$.
With bounds $\Delta_i^+=0.5$ and $\Delta_i^-=0.3$ for all $i$, functions $\Psi^-$ and $\Psi^+$ are visualised in Fig. \ref{Fig2}A. The necessary condition in Theorem~\ref{Th:possible} is violated, because $\Psi^+$ is initially negative, hence MS is never possible for $\Delta \in \mathcal{D}$.

\begin{figure*}[ht]
\centering 
\includegraphics[width=\textwidth]{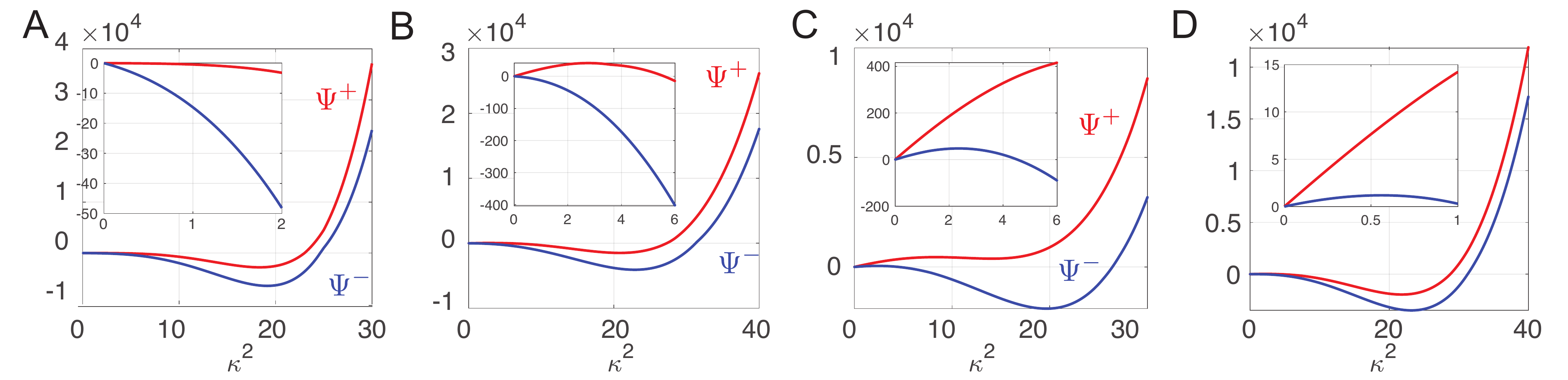}
\vspace{-.5cm}
\caption{$\Psi^-$ (blue) and $\Psi^+$ (red) for the different examples we considered. In Example~\ref{ex:impossible} (panel A), $\Psi^+$ is initially negative, thus violating the necessary condition in Theorem~\ref{Th:possible}: MS never occurs for $\Delta \in \mathcal{D}$. In Example~\ref{ex:possible1} (panel B), $\Psi^+$ is initially positive and has a negative sign change: MS occurs for some  $\Delta \in \mathcal{D}$. In Example~\ref{ex:possible2} (panel C), $\Psi^-$ is initially positive and has a negative sign change: MS occurs for some  $\Delta \in \mathcal{D}$. In Example~\ref{ex:robust} (panel D), $\Psi^-$ (hence $\Psi^+$) is initially positive and then $\Psi^+$ (hence $\Psi^-$) changes sign: MS occurs robustly for all $\Delta \in \mathcal{D}$.}
\vspace{-.5cm}
\label{Fig2}
\end{figure*}
\end{example}


\begin{example}\label{ex:possible1}
Consider the CRN
$C_1  \rightharpoonup   C_2   \rightharpoonup  C_3  \rightharpoonup C_1$
corresponding to equations
$\dot c_1= -g_{1}(c_1) + g_3(c_3)$, $\dot c_2= -g_{2}(c_2) + g_1(c_1)$, $\dot c_3= -g_{3}(c_3) + g_2(c_2)$,
with Jacobian
\begin{equation*}
J
=\underbrace{\begin{bmatrix}
     -1  &   0   &  1\\
     1  &  -1    & 0\\
     0  &   1   & -1
     \end{bmatrix}}_{B} ~~\Delta ~~
     \underbrace{\begin{bmatrix} 
     1 &    0   &  0\\
     0  &   1  &   0\\
     0  &   0  &   1
     \end{bmatrix}}_{C},
\end{equation*}
where
$ \Delta=\mbox{diag} \begin{bmatrix}  \frac{\partial g_{1}}{\partial c_1}   &  \frac{\partial g_{2}}{\partial c_2}   &  \frac{\partial g_{3}}{\partial c_3}
     \end{bmatrix}$.
Taking $\Delta_i^-=2$ and $\Delta_i^+=5$ for all $i$, Fig.~\ref{Fig2}B shows that MS is possible, because the condition in Theorem~\ref{compatible} is satisfied: $\Psi^+$ is initially positive and has a negative sign change. A parameter choice ensuring MS is $\tilde \Delta_{sep} = \begin{bmatrix} 5   &  2 &    2\end{bmatrix}$, computed following Algorithm~\ref{values_algorithm}.

 \end{example}

\begin{example}\label{ex:possible2}
Consider the CRN
$C_1  \rightleftharpoons   C_2$, $C_1+C_2   \rightleftharpoons C_3$ and the corresponding system
$\dot c_1 = -g_{12}(c_1,c_2) - g_1(c_1) + g_2(c_2) + g_3(c_3)$, $\dot c_2 = -g_{12}(c_1,c_2) + g_1(c_1) - g_2(c_2) + g_3(c_3)$, $\dot c_3 = g_{12}(c_1,c_2) - g_3(c_3)$, which has Jacobian
\begin{equation*}
J=    
\underbrace{\begin{bmatrix}
    -1   &  1  &  -1   & -1 &    1\\
     1  &  -1  &  -1 &   -1 &    1\\
     0  &   0 &    1  &   1  &  -1
\end{bmatrix}}_{B}
\Delta~
\underbrace{\begin{bmatrix}
     1   &  0  &   0\\
     0   &  1 &    0\\
     1    & 0  &   0\\
     0   &  1  &   0\\
     0   &  0  &   1
     \end{bmatrix}}_{C},
\end{equation*}
with
$\Delta=\mbox{diag} \begin{bmatrix}  \frac{\partial g_{1}}{\partial c_1}   &  \frac{\partial g_{2}}{\partial c_2}   &  \frac{\partial g_{12}}{\partial c_1}    &   \frac{\partial g_{12}}{\partial c_2}   &
\frac{\partial g_{3}}{\partial c_3}
     \end{bmatrix}$.
When $\Delta_i^-=3$ and $\Delta_i^+=5$ for all $i$, as shown in Fig.~\ref{Fig2}C, MS is possible, because the condition in Theorem~\ref{compatible} is satisfied: $\Psi^-$ is initially positive and has a negative sign change.
A parameter choice ensuring MS is $\Delta^*_{sep} =\begin{bmatrix}   3  &   5   &  3  &   3  &   5 \end{bmatrix}$, computed following Algorithm~\ref{values_algorithm}.

 \end{example}

\begin{example}\label{ex:robust}
Consider the CRN
$ C_1  \rightleftharpoons   C_2   \rightleftharpoons  C_3$, corresponding to system
$\dot c_1= -g_{1}(c_1) + g_{2,1}(c_2)$, $\dot c_2=-g_{2,1}(c_2) + g_1(c_1) -g_{2,2}(c_2) +g_3(c_3)$, $\dot c_3=-g_{3}(c_3) + g_{2,2}(c_2)$,
which has Jacobian
\begin{equation*}    
J=\underbrace{\begin{bmatrix}
     -1  &   1 &   0     &  0\\
     1   & -1   & -1     &  1\\
     0   &  0   &  1      &-1   \end{bmatrix}}_{B}
\Delta ~
\underbrace{\begin{bmatrix}
     1   &  0  &   0\\
     0   &  1  &   0\\
     0   &  1  &   0\\
     0   &  0  &   1   \end{bmatrix}}_{C}
\end{equation*}
with
$ \Delta=\mbox{diag} \begin{bmatrix}  \frac{\partial g_{1}}{\partial c_1}   &  \frac{\partial g_{2,1}}{\partial c_2}   &  \frac{\partial g_{2,2}}{\partial c_2}   &\frac{\partial g_{3}}{\partial c_3}
     \end{bmatrix}$.
Taking $\Delta_i^-=1$ and $\Delta_i^+=2$  for all $i$,
Fig.~\ref{Fig2}D shows that MS occurs robustly, for all choices of $\Delta$ within the bounds, because the condition in Theorem~\ref{Th:robust} is satisfied: both $\Psi^-$ and $\Psi^+$ are initially positive and have a negative sign change.

 \end{example}
     
\section{Concluding Discussion}     
Reaction-Diffusion equations are fundamental in the modelling of biological systems across different scales, from concentrations of different molecular species to spatial dynamics of reproducing organisms. The common feature of all such models is the coupling of non-conserved dynamics, which modifies the total concentration of species (reactions), to conserved dynamics, which moves a given concentration in space (diffusion). Recent work \cite{Cates,Chan2019,Li2020,zwicker2022intertwined} has extended the reaction-diffusion framework to account for more complex models of diffusive dynamics, such as usage of the Cahn-Hilliard functional to model the conserved dynamics corresponding to phase separation. While traditional reaction-diffusion systems have been an intense object of study \cite{arcac2011,Hori2015,Hori2019,Kashima2015,Miranda2021,murray2001mathematical}, full treatment of phase separating systems subject to chemical reactions is only now beginning to be explored, with multiple recent studies pointing to intriguing structural and dynamical properties of such systems \cite{zwicker_2016}.

We have considered the problem of predicting the emergence of microphase separation (MS) in a continuum model that couples phase separation and chemical reactions: spatial dynamics affect how species arrange in space while keeping their total concentration constant, and chemical reaction dynamics determine how the species locally inter-convert, thus changing the total amounts of individual components. MS is associated with the occurrence stable spatial oscillations with regions of high density of material, known as condensates. We have considered uncertain chemical reaction parameters bounded in a known interval and provided easy-to-compute conditions to check whether MS can be ruled out, or can arise for some parameters in the interval, or does robustly arise for all parameters in the interval. Our conditions offer useful insight for the robust experimental design of phase-separating systems in synthetic biology or material science.

\bibliographystyle{IEEEtran}

\end{document}